\newtheorem{remark}{Remark}
\newtheorem{lemma}{Lemma}
\DeclarePairedDelimiter{\ceil}{\lceil}{\rceil}
\journal{arXiv}
\begin{document}

	\begin{frontmatter}
		
		%% Title, authors and addresses
		
		%% use the tnoteref command within \title for footnotes;
		%% use the tnotetext command for theassociated footnote;
		%% use the fnref command within \author or \address for footnotes;
		%% use the fntext command for theassociated footnote;
		%% use the corref command within \author for corresponding author footnotes;
		%% use the cortext command for theassociated footnote;
		%% use the ead command for the email address,
		%% and the form \ead[url] for the home page:
		%% \title{Title\tnoteref{label1}}
		%% \tnotetext[label1]{}
		%% \author{Name\corref{cor1}\fnref{label2}}
		%% \ead{email address}
		%% \ead[url]{home page}
		%% \fntext[label2]{}
		%% \cortext[cor1]{}
		%% \address{Address\fnref{label3}}
		%% \fntext[label3]{}
		
		\title{A branch, price and remember algorithm for the U shaped assembly line balancing problem}
		%% use optional labels to link authors explicitly to addresses:
		%% \author[label1,label2]{}
		
		\author[majid]{Abdolmajid Yolmeh\corref{cor1}}
		\ead{abdolmajid.yolmeh@rutgers.edu}
		\author[Najmeh]{Najmeh Salehi}
		\ead{tuf62199@temple.edu}
		
		\cortext[cor1]{Corresponding author}
		
		\address[majid]{Industrial and Systems Engineering Department, 
			Rutgers University, 96 Frelinghuysen Rd, Piscataway, NJ 08854}
		
		\address[Najmeh]{Department of Mathematics, Temple University, 1805 N Broad St, Philadelphia, PA 19122}
		
		\begin{abstract}
	In this paper we propose a branch, price and remember algorithm to solve the U shaped assembly line balancing problem. Our proposed algorithm uses a column generation approach to obtain tight lower bounds for this problem. It also stores generated columns in memory to enhance the speed of column generation approach. We also develop a modification of Hoffman algorithm to obtain high quality upper bounds. Our computational results show that our proposed algorithm is able to optimally solve 255 of Scholl's well-known  269 benchmark problems. Previous best known exact algorithm, ULINO, is able to solve 233 of the 269 benchmark problems. We also examined our algorithm on a new data set and the results show that our algorithm is able to solve 96.48 percent of all available benchmark problems.
		\end{abstract}
		
		\begin{keyword}
			%% keywords here, in the form: keyword \sep keyword
			Combinatorial optimization \sep Column generation \sep U shaped assembly line balancing problem \sep Branch and bound

			%Infrastructure security game \sep robust approach \sep non-cooperative game\
			%% PACS codes here, in the form: \PACS code \sep code
			
			%% MSC codes here, in the form: \MSC code \sep code
			%% or \MSC[2008] code \sep code (2000 is the default)
			
		\end{keyword}
		
	\end{frontmatter}

\section{Introduction}\label{intro} %%1.
When high volume production of a standardized commodity is required, assembly lines are appropriate. Assembly lines consist of a set of $ m $ stations ordered along a conveyer belt or a similar material handling system that moves the work-pieces through them. Starting from the first station, the work-piece enters each station and stays there for a fixed time span called cycle time $ c $. During this time, a set of tasks is performed on the work-piece, then the work-piece is moved to the next station to perform the next set of tasks. This process continues until the final product is assembled in the last station. Traditionally the stations are arranged in a straight line and each work-piece moves along this line, such a system is called a straight assembly line. 

The best known problem in the literature of assembly line balancing is the Simple Assembly Line Balancing Problem (SALBP) \cite{scholl1999balancing,baybars1986survey}. SALBP is defined as follows: 
\begin{itemize}
	\item   A single product is to be manufactured in large quantities in a straight assembly line. The total work required to produce this product, is divided into a set of $ n $ basic tasks $ T=\{1,2,\dots,n\} $. Each task  $ j\in T$ has a positive integral processing time of $ t_j $. 
	\item Due to technological constraints, a task cannot be started before all of its predecessors are finished. The precedence constraints can be represented by a graph, vertices of this graph represent the tasks and precedence relations are represented by directed arcs. An arc $ (i,j) $ means that task $ i $ is a predecessors of task $ j $. We denote the set of immediate (all) successors of task $ j $ by  $ F_j (F_j^* ) $ and the set of  immediate (all) predecessors of task $ j $ by $ P_j (P_j^* ) $. Figure \ref{figre1} shows an example of a precedence graph with $ n=11 $ tasks. In this graph, task processing times are also represented above the corresponding vertices. 
	\item Assignment of tasks to stations must observe precedence constraints. In other words, task $ j $ cannot be assigned to stations $ k $ unless all of its predecessors are assigned to one of the stations $ 1,2,\dots,k $. 
	\item Each task must be assigned to exactly one station. The set of tasks assigned to a station $ k $ is called a work-load of station $ k $ and denoted by $ S_k $. Sum of processing times of tasks assigned to a station $ k $ is called station time and denoted by $ t(S_k ) $. Station times must always observe the cycle time constraint i.e. $ t(S_k )=\sum_{j\in S_k} t_j \le c $.
\end{itemize}
\begin{figure}[H]
	\centering	
	\includegraphics{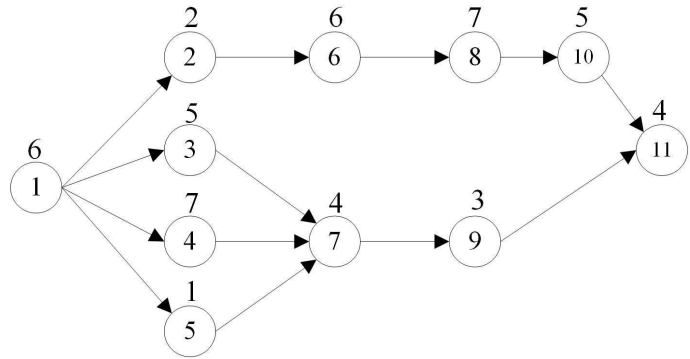}
	\caption{An instance of SALBP-1}
	\label{figre1}
\end{figure}
There are four versions of SALBP: 
\begin{itemize}
	\item SALBP-1: minimizing number of stations $ m $ for a fixed cycle time $ c $.
	\item SALBP-2: minimizing cycle time $ c $ for a given number of stations $ m $. 
	\item SALBP-E: maximizing line efficiency $ E=\frac{(\sum_{j=1}^n t_j )}{mc} $. 
	\item SALBP-F: feasibility problem for given values of $ c $ and $ m $, determining whether or not there is a feasible balance, if so, the problem involves finding a feasible balance. 
\end{itemize}
All of four versions of SALBP are known to be NP-Hard \cite{wee1982assembly}. The assumptions of SALBP are very restricting in comparison with the real world assembly line systems. Therefore, researchers recently have focused on identifying and modeling more realistic situations in assembly lines. The resulting problems are called generalized assembly line balancing problems (GALBP). Several generalizations have been studied for the SALBP. Some examples of GALBPs are assembly lines with resource constraints \cite{augpak2005assembly,corominas2011assembly}, assembly lines with setup times between tasks \cite{andres2008balancing,yolmeh2012efficient}, assembly lines with task deterioration and learning effect \cite{toksari2010assembly,shahanaghi2010scheduling}, assembly lines with parallel workstations \cite{buxey1974assembly}, assembly line balancing and supply chain design \cite{paksoy2012supply,yolmeh2015outer} and assembly lines with multi-manned workstations \cite{dimitriadis2006assembly,fattahi2011mathematical}. Some of the latest surveys of assembly line balancing problems are \cite{erel1998survey,rekiek2002state,scholl2006state}. 

The surveys suggest that, while many applicable problems have been identified and modeled, developing advanced solution methods for these models still lags behind. Even though there have been significant algorithmic developments to solve the SALBP, the same cannot be said about GALBPs. Therefore, the surveys call for additional research to apply modern solution concepts such as advanced enumeration and bounding techniques to solve the generalized models. In this paper we make the first attempt toward developing a branch, price and remember (BP\&R) algorithm to solve the U shaped line balancing problem (UALBP). We use column generation to obtain tight bounds for this problem. We also use memory to avoid visiting redundant sub-problems. To the best of our knowledge, this is the first attempt to apply a branch, price and remember algorithm to solve a U shaped assembly line balancing problem. 

The most closely related work to our study is the branch, bound and remember (BB\&R) algorithm developed by \cite{sewell2012branch} to solve the SALBP. BB\&R combines the idea of branch and bound algorithm with dynamic programming through memorizing already visited nodes in the search tree. Even though our proposed BP\&R also uses memory to avoid redundant nodes, there are some major differences between BP\&R and BB\&R. The first and most important difference is that, unlike BB\&R, BP\&R uses column generation to obtain tight lower bounds for the nodes in the search tree. Column generation algorithm is shown to yield high quality bounds for the SALBP \cite{peeters2006linear}, however its running time was deemed to be too high to be used in a branch and bound algorithm. The main contribution of this paper is devising a scheme to leverage the tight bounds obtained by the column generation algorithm while keeping its run times reasonably low. We do this by relaxing the pricing sub-problem and use of memory. Second difference of BP\&R from BB\&R is in the search strategy. BP\&R uses a best first search strategy, while BB\&R uses cyclic best-first search. Lastly, even though both BB\&R and BP\&R use a modification of Hoffman algorithm to obtain an initial solution, BP\&R uses additional flexibility of U-shaped lines to obtain better solutions. 
\begin{figure}[H]
	\centering
	\includegraphics[scale=1.3]{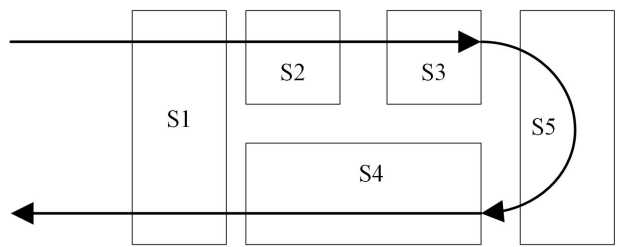}
	\caption{U-shaped assembly line}
	\label{figre2}
\end{figure}
U shaped lines are a generalization of straight lines where the stations can be arranged in a U shape rather than a straight line. This is done by allowing two work-pieces to be processed simultaneously in a single station. By analogy with SALBP, different problem types can be distinguished: UALBP-1, UALBP-2, UALBP-E and UALBP-F. Figure \ref{figre2} shows an example of U-shaped line. In this figure, the first task of one work-piece, which is starting to be produced and the last task of another work-piece is performed in the same station (S1). 
We define the set of forward available tasks as the set of tasks with no unassigned predecessors. In other words, a tasks is called forward available if and only if all of its predecessors have been assigned to a station. Similarly, the set of backward available tasks is the set of tasks with no unassigned successors. The difference between straight and U-shaped lines is that in straight lines, to determine a solution, one should start from the first task and assign tasks into stations while moving forward through the precedence graph i.e. using only forward available tasks. Whereas in a U-shaped line, one can use both forward available and backward available tasks to assign to stations \cite{miltenburg1994u}. Being able to use both forward and backward available tasks to assign to stations results in more possibilities of assigning tasks to stations. Therefore, in comparison with SALBP, UALBP offers more flexibility in assigning tasks to stations, therefore, in many cases, it leads to higher efficiency. 

Consider the example of Figure \ref{figre1} with a cycle time $ c=10 $ and objective of minimizing the number of stations i.e. UALBP-1. Figure \ref{figre3} shows an optimal solution with $ m=5 $ stations for this example problem. Consider station 2 in this figure, we have $ S_2=\{3,10\} $. Task 3 is performed on the work-piece whenever it visits station 2 for the first time, which happens when the work-piece is moving from station 1 to station 2, after its predecessor (task 1) has been executed in station 1. When the work-piece returns to station 2 from station 4, all predecessors of tasks 10 have already been performed (in stations 3 to 5) and task 10 can be executed. After performing task 10, the work-piece moves to station 1 to perform the successor of task 10 i.e. task 11. Then the completed product exits station 1. An optimal solution to the corresponding SALBP-1 requires six stations:
$S_1=\{1,2,5\} $, $ S_2=\{4\} $, $ S_3=\{3,7\} $, $ S_4=\{6\} $, $ S_5=\{8,9\} $ and $ S_6=\{10,11\} $. Therefore, knowing that $ \sum_{j\in T} t_j =50 $, for this example line efficiency of optimal U shaped line is 100\% $ (E=\frac{\sum_{j=1}^n t_j}{mc}=\frac{50}{50}) $ , but line efficiency of optimal straight line is 83.33\% $ (E=\frac{\sum_{j=1}^n t_j}{mc}=\frac{50}{60}) $, which highlights the advantage of U shaped lines over straight lines.
\begin{figure}[H]
	\centering
	\includegraphics{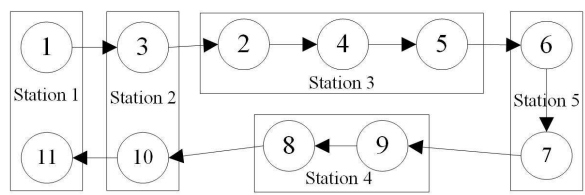}
	\caption{Solution to the example problem with c=10}
	\label{figre3}
\end{figure}
Reviewing the literature of UALBP, \cite{miltenburg1994u} introduced and modeled U-shaped assembly line balancing problem. They developed a dynamic programming approach to solve this problem. \cite{ajenblit1998applying} proposed a genetic algorithm to solve UALBP-1. \cite{urban1998note} developed an integer programming formulation for UALBP-1. \cite{scholl1999ulino} developed a procedure named ULINO (U-Line Optimizer) which is a modification of SALOME \cite{scholl1997salome} to solve all versions of UALBP. ULINO uses a station oriented branching and a depth first search strategy. It also uses several lower bounds and dominance rules. \cite{erel2001balancing} developed a simulated annealing algorithm to solve UALBP-1.
The literature review suggests that, even though many researchers have studied the SALBP, there are few advanced solution methods for the U shaped line balancing problem. The only study that used a branch and bound algorithm to solve this problem is performed by \cite{scholl1999ulino}. They developed an exact algorithm to solve UALBP-1 but also discuss how to use it to solve other versions of the problem. In this paper we develop a branch and price algorithm to solve the UALBP-1. We use a column generation approach to compute tight bounds for this problem. The application of column generation to compute a lower bound is fairly infrequent in the field of assembly line balancing problem. The only study that used column generation to obtain a lower bound for a line balancing problem is \cite{peeters2006linear}. They used Dantzig--Wolfe decomposition to obtain an LP relaxation of SALBP-1 and used column generation algorithm to solve it. To solve the pricing sub-problem, they used a branch and bound algorithm. Their computational results show that, their proposed column generation approach yields very good lower bounds. However, the run times of the column generation was too high to be considered in a branch and bound algorithm as a lower bound. In this paper we tackle the issue of high run times and apply it for the U shaped line balancing problem. Specifically, we relax the sub-problem to obtain a pure knapsack problem, which can be solved in pseudo polynomial time, to obtain a faster column generation. Moreover, we keep a pool of previously generated columns to avoid redundantly generating the same column. Our proposed branch, price and remember also uses memory to avoid visiting redundant nodes in the search tree.

Our proposed BP\&R uses a modification of Hoffman heuristic to obtain a high quality initial solution. Hoffmann algorithm \cite{hoffmann1963assembly} is an efficient heuristic for SALBP-1 that works on a station by station basis. At each iteration, it enumerates all of the possible assignments of tasks to the current station and selects the one with the smallest idle time. This process is repeated until all tasks are assigned. \cite{fleszar2003enumerative} proposed a modification of Hoffman heuristic that works in a bidirectional manner. In other words, their proposed heuristic builds solutions from both sides of the precedence graph and selects the best one.  \cite{sewell2012branch} and \cite{morrison2014application} proposed a modification of Hoffman algorithm that in each iteration, instead of selecting the load with smallest idle time, selects the load with highest value of $ \sum_{j \in U} {(t_j+\alpha w_j+\beta |F_j |-\gamma)} $; where $ F_j (F_j^* ) $ is the set of immediate (all) successors of task $ j $ and $ w_j=t_j+\sum_{k \in F_j^*} t_k $  is the positional weight of task $ j $. $ \alpha,\beta,\gamma $ are coefficients that need to be numerically determined. Our proposed modification of Hoffman heuristic also uses a different criterion to select the load. The difference of our proposed heuristic from the one proposed by  \cite{sewell2012branch} is that our proposed heuristic is designed to be applied on a U shaped line, where forward assignable and backward assignable tasks have different implications on which tasks will be available for the next iteration.

The rest of this paper is organized as follows: in section \ref{sec2} we describe the proposed BP\&R algorithm. Computational results are presented in section \ref{sec3}. In section \ref{sec4}, the main conclusions of the paper and suggestions for future research are given.

\section{The proposed branch, price and remember algorithm}\label{sec2}
BP\&R is a branch, bound and remember (BB\&R) algorithm which uses column generation to obtain lower bounds in each node in the search tree. BB\&R is a branch and bound algorithm that uses memory to avoid revisiting nodes that have already been visited. In BB\&R before branching on a node, it is looked up in the memory to see if it has already been visited. This idea has been used in different fields of combinatorial optimization \cite{jouglet2004branch,morin1976branch}. In the field of assembly line balancing problem, OptPack proposed by \cite{nourie1991finding} to solve SALBP-1 used a variation of BB\&R by storing the sub-problems in a tree structure. \cite{scholl1999balancing} also used a tree structure to store already solved sub-problems. 
\cite{sewell2012branch} used BB\&R to great effect to solve the SALBP-1. For the U shaped line balancing problem \cite{scholl1999ulino} also used memory to store the sub-problems in a tree structure. The next subsections present the details of our proposed BP\&R algorithm. 

\subsection{Branching}
In this section we illustrate the enumeration procedure for our proposed algorithm. The branch and price algorithm explores nodes of an enumeration tree. Each node represents a partial solution. We denote a partial solution as $ \mathcal{P}=(A,U,S_1,S_2,\dots,S_m ) $, where $ A $ and $ U $ are the sets of assigned and unassigned tasks, respectively; $ m $ is the number of stations used by the partial solution and  $ S_1,S_2,\dots,S_m $ are the sets of tasks assigned to stations 1 to $ m $. Based on these definitions we have $ A=\bigcup \limits_{i=1}^m S_i $  and $ U=T\backslash A $. 
%Every partial solution must satisfy cycle time and precedence constraints. Cycle time constraint enforces the stations times to be less than or equal to the cycle time i.e.  $ \sum_{j \in S_i} t_j  \le c $. Precedence constraints in a U shaped line indicates that assigning a task $ j $ to station $ i $ is feasible if either for all tasks $ k \in P_j $, we have: $ k \in S_1\cup \dots \cup S_i $ or for all tasks $ l \in F_j $, we have: $ l \in S_1 \cup \dots \cup S_i $. $ \mathcal{P} $ is maximally loaded if there is no task $ j $ in $ U $ that can be assigned to one of $ m $ stations $ S_1,S_2,\dots,S_m $ without violating the precedence or cycle time constraints. 

Branching on a partial solution (node) means extending it by adding one or many tasks from $ U $ to $ A $ such that the resulting solution is still feasible. There are two main types of branching in the field of assembly line balancing: station oriented branching and task oriented branching. In station oriented branching branches are generated by creating a complete load. However, in task oriented branching, new branches are generated by adding a single task to the current station, if the current station cannot accommodate the task, a new station is created. ULINO, the only branch and bound algorithm available in the literature to solve the U shaped line balancing problem, uses station oriented branching. The proposed BP\&R algorithm also uses station oriented branching for two main reasons: (1) Surveys suggest that station oriented branching tends to yield better results in comparison with task oriented branching. (2) The station oriented branching is a better fit for the BP\&R algorithm. It allows fast computation of column generation lower bound with only minor changes in the restricted master problem.
\subsection{Bounds}
\subsubsection{Upper bound}
Using a heuristic to find a good upper bound can significantly decrease the run time of BP\&R algorithm. We use a modification of the Hoffmann algorithm to obtain a high quality upper bound for the UALBP-1. Hoffmann algorithm \cite{hoffmann1963assembly} is an efficient heuristic for SALBP-1 that generates a solution on a station by station basis. At each iteration, it selects the work-load with smallest idle time. \cite{sewell2012branch} modified the Hoffmann algorithm to search for the load with maximum value of $ \sum_{j \in U} {(t_j+\alpha w_j+\beta |F_j |-\gamma)} $, instead of the load with minimum idle time. In this formula  $ F_j (F_j^* ) $ is the set of immediate (all) successors of task $ j $ and $ w_j=t_j+\sum_{k \in F_j^*} t_k $  is the positional weight of task $ j $. $ \alpha,\beta,\gamma $ are coefficients that need to be numerically determined. 

We propose a Modified Hofmann Heuristic for U shaped lines (MHHU) that works as follows: at each iteration find the load that maximizes $ \sum_{j \in F} {(t_j+\alpha w_j+\beta|F_j |-\gamma)} +\sum_{j \in B} {(t_j+\alpha w_j^{'}+\beta |P_j|-\gamma)} $, where $ F $ is the set of  tasks that are assigned in the forward direction in the U shaped line, $ B $ is the set of  tasks that are assigned in the backward direction in the U shaped line,  $ P_j (P_j^* ) $ is the set of immediate (all) predecessors of task $ j $ and $ w_j^{'}=t_j+\sum_{k \in P_j^*} t_k $  is the backward positional weight of task $ j $. MHHU is called in every node in the BP\&R search tree before branching, to check if the current upper bound can be improved. 
\subsubsection{Lower bounds}
In this section we present the set of lower bounds used in the proposed BP\&R. BP\&R uses three standard lower bounds available in the literature of assembly line balancing ($ LB_1 $ to $ LB_3 $) and a new column generation based bound (CG). $ LB_1 $ to $ LB_3 $ have been previously used in \cite{sewell2012branch}. For each node $ LB_1 $ to $ LB_3 $ are computed first, if they are not able to prune the node, then CG is computed. 
$ LB_1 $ is the total capacity bound which is the simplest available bound in the literature. It follows from the fact that total available time must be greater than or equal to total work content, therefore $LB_1=\ceil{\frac{\sum_{j=1}^{n} t_j}{c}} $ \cite{baybars1986survey}. $ LB_2 $ is obtained by counting the number of tasks $ j $ with $ t_j>\frac{c}{2} $, because these tasks have to be assigned to different stations. 
\[ LB_2=\left| \left\lbrace j \in T: t_j>\frac{c}{2} \right\rbrace \right| + \left\lceil \frac{|\{j \in T: t_j=c/2\}|}{2} \right\rceil \]
$ LB_3 $ is obtained by generalizing $ LB_2 $, this lower bound is obtained by adding task weights that are determined as follows: a weight of 1 is given to all tasks $ j $ with $ t_j>\frac{2c}{3} $, a weight of $ \frac{1}{2} $ is given to all tasks with $ t_j \in (\frac{c}{3},\frac{2c}{3}) $. Tasks $ j $ with $ t_j $ equal to $ \frac{c}{3} $ and $ \frac{2c}{3} $ are given weights of $ \frac{1}{3} $ and $ \frac{2}{3} $, respectively.
\[ 
w_j = \left\{\begin{array}{@{}l@{\quad}l}
1 & \mbox{if $t_j>2c/3$} \\[\jot]
2/3 & \mbox{if $t_j=2c/3$}  \\[\jot]
1/2 & \mbox{if $c/3<t_j<2c/3$} \\[\jot]
1/3 & \mbox{if $t_j=c/3$}
\end{array}\right.
\] 
After computing the weights, $ w_j $, $ LB_3 $ is computed as $ LB_3=\ceil{\sum_{j=1}^{n} w_j} $. 

If $ LB_1 $ to $ LB_3 $ fail to prune the node, column generation is used to obtain a tighter lower bound for the node. The idea behind this approach is to relax the precedence constraints to obtain a bin packing problem and use a column generation approach to obtain a lower bound for the resulting bin packing problem. There are other algorithms to solve the bin packing problem more efficiently. However, we use column generation because the generated columns for one node in the BP\&R search tree can be used to compute the CG lower bound for other nodes. Therefore, after the first run of column generation i.e. at the root node, the subsequent runs will be faster because of the already generated columns. BP\&R maintains a pool of generated columns and uses this pool whenever CG is called at a node.

The objective of column generation is to compute the lower bound for a partial solution $ \mathcal{P}=(A,U,S_1,S_2,\dots,S_m ) $, which corresponds to a node in the BP\&R search tree. After relaxing the precedence constraints, a bin packing problem with the set of unassigned tasks $ U $, as the set of available items (to be fit in a minimum number of bins i.e. stations) is obtained. We define a packing, or a workload or load, as a set of tasks that can be assigned to one station without violating the cycle time constraint. We denote the set of all possible loads by $ L $. Indexing loads by $ l $, and relaxing the precedence constraints we can formulate the UALBP-1 as following:
\begin{flalign}
\label{eq1}
&min \sum_{l\in L} x_l \\ \label{eq2}
&s.t. \quad  \sum_{l \in L} a_{jl} x_l \ge 1 \quad \forall j \in U \\ \label{eq3}
& \quad \quad x_l \in \{0,1\} \quad   \forall l \in L
\end{flalign}

%\begin{equation}
%\end{equation}
%\begin{equation}
%\end{equation}
%\begin{equation}
%\end{equation}
In this formulation, $ x_l $ is a binary variable which is equal to 1 if load $ l $ is selected and 0 otherwise. $a_{jl} $ is a binary parameter that is equal to 1 if task $ j $ belongs to packing $ l $, zero otherwise. The objective function in this formulation is to minimize the number of selected loads. Number of selected loads is equivalent to the number of stations because each load is assigned to one station. Constraint (2) ensures that each unassigned task $ j\in U  $ is in at least one of the selected loads. Constraint (3) is the integrality constraint for variable $ x_l $. We consider the LP relaxation of (\ref{eq1})-(\ref{eq3}):
\begin{flalign}
\label{eq4}
&min \sum_{l \in L} x_l\\\label{eq5}
&s.t. \quad  \sum_{l \in L} {a_{jl} x_l} \ge 1 \quad \forall j \in U	\\ \label{eq6}
& \quad \quad x_l \ge 0 \quad  \forall l \in L 
\end{flalign}

%\begin{equation}
%\end{equation}
%\begin{equation}
%\end{equation}
%\begin{equation}
%\end{equation}	
The LP characterized by constraints (\ref{eq4}) to (\ref{eq6}) is called the linear programming master problem (LPM). Note that each column in the LPM corresponds with a work-load. In general the set $ L $, may be exponentially large; however, the number of non-zero variables (the basic variables) in the LPM is equal to the number of constraints, $ |U| $. Therefore, even though the number of possible loads $ L $ is large, only a small number of them is used in the optimal solution. Column generation algorithm uses this idea to start with a subset  $ L^{'} \subseteq L $ of columns and generate columns as needed. The starting subset $ L^{'} $ should be selected such that the following problem is feasible:
\begin{flalign}
&min \sum_{l \in L^{'}} x_l\\ 
&s.t. \quad   \sum_{l \in L^{'}} a_{jl} x_l \ge 1 \quad \forall j \in U	\\
& \quad \quad x_l \ge 0 \quad \forall l \in L^{'}
\end{flalign}

%\begin{equation}
%
%\end{equation}
%\begin{equation}
% 
%\end{equation}
%\begin{equation}
%
%\end{equation}
This problem is called Restricted LPM (RLPM). To initialize the column generation algorithm in the root node, we use the loads generated by the MHHU algorithm. Using variable $ \pi_j $ for constraint $j,$ dual of RLPM is as follows:
\begin{flalign}
&max \sum_{j \in U} \pi_j  \\
&s.t. \quad  \sum_{j \in U} {a_{jl}\pi_{j}} \le 1 \quad \forall l \in L^{'}\\
& \quad \quad  \pi_j \ge 0 \quad \forall j \in U
\end{flalign}

%\begin{equation}
%
%\end{equation}
%\begin{equation}
%
%\end{equation}
%\begin{equation}
%
%\end{equation}
Next step is to find a column (load) in $ L\backslash L^{'} $ that could improve the current optimal solution of RLPM. Given the optimal dual solution $ \boldsymbol{\bar{\pi}}=(\bar{\pi}_1,\bar{\pi}_2,\dots,\bar{\pi}_n) $ of (RLPM), the reduced cost of column $ l \in L\backslash L^{'} $ is $ 1-\sum_{j \in U} a_{jl} \bar{\pi}_j$. Based on the concept of duality in linear programming, optimality of RLPM is equivalent with feasibility of the dual. Therefore, loads that violate constraint $ \sum_{j \in U} {a_{jl}\bar{\pi}_{j}} \le 1 $ can improve the current optimal solution. Therefore we should look for a column (load) $ l $ such that: $ 1-\sum_{j \in U} {a_{jl}\bar{\pi}_{j}} <0 $. Note that $ \bar{\pi}_j $ is fixed, and the problem is to find a load $ l $ with $ a_{jl} $ such that: $ 1-\sum_{j \in U} {a_{jl}\bar{\pi}_{j}} <0 $. This problem is called the pricing sub-problem. 
The pricing sub-problem involves finding a set of tasks, a work-load, with a negative reduced cost. Because this work-load does not have to observe the precedence constraints, the problem of finding a work-load with a negative reduced cost is a knapsack problem. To solve this sub-problem, consider the set of unassigned tasks $ U $ and solve the knapsack problem with $ U $ as the set of available items and $ \bar{\pi}_j $ as their values to maximize the overall value $ \sum_{j \in U}{a_{jl}\bar{\pi}_j} $ subject to cycle time constraint. We use the well-known dynamic programing method to solve the generated knapsack problems, which runs in pseudo-polynomial time i.e. $ O(|U|c) $. We also use a dual bound to obtain a lower bound on the number of stations in each iteration

\begin{lemma}\label{lemma1} Let $ v(\text{RLPM}) $ and $ v(\text{LPM}) $ denote the optimum objective function value of the current $ \text{RLPM} $ and $ \text{LPM} $, respectively. Also let $ v(\text{PSP}) $ be the minimum reduced cost obtained by solving the pricing sub-problem to optimality. 
	\begin{enumerate}[label=(\roman*)]
		\item \label{parti} We have: $ v(LPM) \ge \frac{v(RLPM)}{1-v(\text{PSP})}$.
		\item \label{partii} If $ \ceil{v(RLPM)}=\ceil{\frac{v(RLPM)}{1-v(\text{PSP})}} $ we can terminate the column generation algorithm with $ \ceil{v(RLPM)} $ as the obtained lower bound on the number of stations.
	\end{enumerate}

\end{lemma}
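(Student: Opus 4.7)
The plan is to prove \ref{parti} via the classical Farley-type argument: construct a dual-feasible solution for the full LPM by rescaling the current RLPM dual, and then invoke weak LP duality. I would then derive \ref{partii} by sandwiching $v(\text{LPM})$ and appealing to integrality.

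First I would take $\bar\pi$, the optimal dual of the current RLPM, and observe that the minimum reduced cost reported by the pricing sub-problem satisfies $v(\text{PSP}) = 1 - \max_{l \in L}\sum_{j \in U} a_{jl}\bar\pi_j$, by the very definition of the sub-problem. Rescaling yields the candidate vector $\tilde\pi := \bar\pi/(1-v(\text{PSP}))$: it is componentwise nonnegative (because $\bar\pi \ge 0$ and the denominator is positive in the regime below) and satisfies $\sum_{j} a_{jl}\tilde\pi_j \le 1$ for every $l \in L$, hence is feasible for the dual of the full LPM. Weak LP duality immediately delivers
\[
v(\text{LPM}) \;\ge\; \sum_{j\in U} \tilde\pi_j \;=\; \frac{v(\text{RLPM})}{1-v(\text{PSP})},
\]
which is precisely \ref{parti}.

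For \ref{partii}, I would combine \ref{parti} with the elementary inequality $v(\text{LPM}) \le v(\text{RLPM})$, which holds because RLPM is a column-restriction of LPM in a minimization. Under the hypothesis, the two endpoints of this sandwich round up to a common integer, so that integer must equal $\lceil v(\text{LPM})\rceil$. Since (\ref{eq4})--(\ref{eq6}) is the LP relaxation of the integer bin-packing formulation (\ref{eq1})--(\ref{eq3}), which itself relaxes UALBP-1 by dropping precedence, $\lceil v(\text{LPM})\rceil$ is a valid lower bound on the optimal number of stations, and $\lceil v(\text{RLPM})\rceil$ can therefore be returned without generating any further columns.

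The one subtlety I expect to be the main obstacle is justifying the rescaling: the denominator $1-v(\text{PSP})$ must be strictly positive so that the scaled vector remains well-defined and nonnegative. This is automatic in the regime where the lemma is actually invoked, namely $v(\text{PSP}) \le 0$, since then $1-v(\text{PSP}) \ge 1 > 0$ and dividing $\bar\pi \ge 0$ by a positive scalar preserves nonnegativity. In the complementary case $v(\text{PSP}) > 0$, the pricing sub-problem certifies dual feasibility of $\bar\pi$ for the full LPM, so strong LP duality already gives $v(\text{LPM}) = v(\text{RLPM})$ and the column generation loop terminates before Lemma~\ref{lemma1} would be needed.
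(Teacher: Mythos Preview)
Your argument is correct. For part~\ref{partii} it coincides with the paper's proof (sandwich $v(\text{LPM})$ between $v(\text{RLPM})$ and $v(\text{RLPM})/(1-v(\text{PSP}))$, then round up). For part~\ref{parti}, however, you take a genuinely different route: you construct an explicit dual-feasible point for the full LPM by Farley-type rescaling $\tilde\pi=\bar\pi/(1-v(\text{PSP}))$ and invoke weak duality. The paper instead argues in the primal, bounding the possible improvement of $v(\text{RLPM})$ by $v(\text{LPM})\cdot v(\text{PSP})$ via the reduced-cost inequality $1\ge v(\text{PSP})+\sum_j a_{jl}\bar\pi_j$ summed against an optimal LPM solution (the Lasdon-style bound, with a reference to \cite{lubbecke2011column}). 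Your version is more self-contained and makes the sign condition on $1-v(\text{PSP})$ explicit; the paper's version leaves that detail implicit but ties the bound more directly to the ``how much can one iteration improve'' intuition. Both are standard derivations of the same inequality, so either is acceptable here.
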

\begin{proof}
	General form of part \ref{parti} result can be found in \cite{lubbecke2011column}. During each iteration of the column generation algortihm one cannot improve $ v(\text{\textit{RLPM}}) $ by more than $ \sum_{l\in L} x_l $ times the smallest reduced cost $ v(\text{\textit{PSP}}) $. Moreover, we know that $ \sum_{l\in L} x_l =v(\text{LPM}) $ for an optimal solution of the master problem. Therefore we have:
	$  v(\text{\textit{LPM}}) \ge  v(\text{\textit{RLPM}})+(\sum_{l\in L} x_l)v(\text{\textit{PSP}})=v(\text{\textit{RLPM}})+v(\text{\textit{LPM}})v(\text{\textit{PSP}})$. The part \ref{parti} follows from this inequality.\\
	To prove part \ref{partii} we have: $  \frac{v(RLPM)}{1-v(\text{PSP})} \le v(LPM) \le v(RLPM)$. Because the number of stations should be a whole number, we can use $ \ceil{v(LPM)} $ as a lower bound on the number of stations. Moreover, from $ \ceil{v(RLPM)}=\ceil{\frac{v(RLPM)}{1-v(\text{PSP})}} $, we can conclude that $ \ceil{v(LPM)} = \ceil{v(RLPM)}=\ceil{\frac{v(RLPM)}{1-v(\text{PSP})}}$
\end{proof}

\begin{remark}
	Note that in order for the bound in lemma \ref{lemma1} to be valid, the pricing subproblem should be solved to optimality. In general, the dual bound is not monotone over the iterations, this is called the yo-yo effect \cite{lubbecke2011column}.
\end{remark}

Algorithm \ref{alg1} shows the pseudo-code for the overall column generation algorithm. 
%\begin{figure}[H]
%	\centering
%	\includegraphics[scale=0.5]{fig4}
%	\caption{pseudo-code for the overall procedure of the column generation algorithm}
%	\label{figre4}
%\end{figure}
\begin{algorithm}	
	Initialize columns $ \mathcal{L}^\prime $.\\
	Solve RLPM. Let $ \overline{\bm{\pi}}= (\overline{\pi}_1, \overline{\pi}_2, \dots, \overline{\pi}_n) $ be the obtained optimal dual values.\\
	Solve the pricing sub-problem using $ \overline{\bm{\pi}}$ as item values and let $ \textbf{a}^{\textbf{*}}_\textbf{l}=(a_{1l}^*, a_{2l}^*, \dots, a_{nl}^*) $ be the obtained optimal solution.
	
	\eIf{$ 1-\sum_{j \in U} a_{jl}^* \overline{\pi}_j<0  \text{ and } \ceil{\sum_{j \in U}\overline{\pi}_j} \ne \ceil{\frac{\sum_{j \in U}\overline{\pi}_j}{\sum_{j \in U} a_{jl}^* \overline{\pi}_j}}$}{
		Add the new column $ \textbf{a}^{\textbf{*}}_\textbf{l} $ to RLPM.\\
		Go To Line 2.	
	}{
		Return $\sum_{j \in U}\overline{\pi}_j$ as the obtained lower bound.\\
		Terminate the procedure.
	}
	\caption{Pseudo-code for the overall column generation algorithm}
	\label{alg1}
\end{algorithm}
As seen in this pseudo-code, the column generation algorithm starts with a set of initial columns. This set is obtained by the MHHU algorithm in the root node; in other nodes the pool of already generated columns is used as initial set of columns. The RLPM is solved using this set of initial columns and the vector of dual values is obtained. Dual values are then used in the pricing sub-problem to generate a new column. If a new column with a negative reduced cost is obtained and the termination criterion from lemma \ref{lemma1} is not satisfied, it is added to the RLPM and the process is repeated; otherwise the procedure terminates.
\subsection{Use of memory, dominance rules and search strategy }
As its name suggests, BP\&R memorizes already considered nodes. Before considering a node $ \mathcal{N}=(A,U,S_1,S_2,\dots,S_m ) $, BP\&R checks the memory to see if there is a node with the same set of assigned tasks, $ A $, if there is such a node $ \mathcal{M}=(A,U,S_1^{'},S_2^{'},\dots,S_{m^{'}}^{'} ) $ and $ m^{'} \le m $, then $ \mathcal{N} $ is dominated by $ \mathcal{M} $ and $ \mathcal{N} $ can be pruned. OptPack \cite{nourie1991finding}, SALOME \cite{scholl1997salome} and ULINO \cite{scholl1999ulino} applied this dominance rule using a compact tree structure. \cite{sewell2012branch} used a hash table to store the sub-problems. BP\&R Also uses a hash table to store already visited nodes.

Other than the memory based dominance rule, BP\&R uses the maximally loaded dominance rule. This rule states that if a node contains a station that is not maximally loaded it is dominated by a node that contains only maximally loaded stations. 

The third dominance rule used in BP\&R is the Jackson rule which was used in SALOME \cite{scholl1997salome} and BB\&R \cite{sewell2012branch}. ULINO \cite{scholl1999ulino} uses a modification of this rule for a U shaped line, which we will directly apply in our proposed BP\&R.

BP\&R uses a best first search strategy where the nodes with higher number of, fixed, stations have the higher priority. If two nodes have the same number of stations, the one with lower CG lower bound is selected to be branched on. A priority queue is used to handle the generated nodes.
Algorithm \ref{alg2} shows the pseudo-code for the overal BP\&R algorithm. The algorithm starts with setting the root node and computing the upper and lower bounds. If these bounds can prove optimality, i.e. $ LB=UB $, then we are done and $ UB $ can be reported as the optimum number of stations. Otherwise if $ LB<UB $, we initialize the branch and bound tree by initializing a Hash table $ H $ and a priority queue $ PQ $ in lines 5 and 6, respectively. In line 7 we push the root node into the priority queue. The while loop in line 8 starts the best first search process. In lines 9 and 10, we remove a node $ \mathcal{M} $ from the top of the priority queue. Line 11 starts the branching process on node $ \mathcal{M} $ by generating all of the possible children of this node. In line 12, for each generated child we check if it is dominated or if it can be pruned by a lower bound argument. If the child node is dominated or can be pruned based on its computed lower bound, it will be discarded. Otherwise, we compute an upper bound based on node $ \mathcal{C} $ using MHHU approach and try to find an improved upper bound. Lines 19 and 20 add the new child node $ \mathcal{C} $ to the priority queue and the Hash table. Note that Algorithm \ref{alg2} does not reflect the subtlety of order of computing $ LB_1 $ to $ LB_3 $ and column generation bound. As mentioned in the lower bounds section, because $ LB_1 $  to $ LB_3 $ are faster to compute, we compute them first in the hope of pruning the node. If they fail to prune the node, we compute the column generation bound. 

\begin{algorithm}
	Initialize the root node $ \mathcal{N};$\\	
	$ UB :=$ Compute\_upper\_bound $(\mathcal{N});$ \\
	$ LB :=$ Compute\_lower\_bound $(\mathcal{N});$ \\
	\If{$ LB<UB $}{
		Initialize Hash Table $ H ;$\\
		Initialize Priority Queue $ PQ;$\\
		$ PQ.push(\mathcal{N});$\\
		  \While{$ PQ $ is not empty}{
		  	$ \mathcal{M}:=PQ.top(); $\\
		  	$ PQ.pop(); $\\
			\ForEach{child $ \mathcal{C} $ of $ \mathcal{M} $}
			{
				
				\eIf{C is dominated OR Compute\_lower\_bound $(\mathcal{C})$ $ \ge UB$ }{
					Discard $ \mathcal{C};$\\
					Continue\;}{
						\If{Compute\_upper\_bound $(\mathcal{C})$ $ < UB$ }{
						Update $ UB;$}
					$ PQ.push(\mathcal{C});$\\
					$ H \leftarrow \mathcal{C};$
				}
%				\If{Compute\_lower\_bound $(\mathcal{C})$ $ \ge UB$ }{
%					Discard $ \mathcal{C};$\\
%					Continue\;}

			}			
		}
	}
Output $ UB $ as the obtained optimum number of stations;
	\caption{Pseudo-code for the overall BP\&R}
	\label{alg2}
\end{algorithm}

\section{Computational results} \label{sec3}
In this section we evaluate the BP\&R and CG algorithms in terms of quality of obtained solutions and run times.  The algorithms are coded in C++ and CPLEX 12.6 solver has been used to solve the LPs. The computational experiments are performed on a computer with
Intel Core i7 2.00 GH processor and 8 GB of RAM. We use 269 benchmark problems of SCHOLL data set as well as a new data set generated by \cite{otto2013systematic} to perform the experiments. Both of these data sets are available at the assembly line balancing research homepage (http://alb.mansci.de/). Each instance in the data sets was limited to a total processing time of 500 seconds. 
\subsection{Experiments on SCHOLL data set}

In this subsection we run experiments on the 269 benchmark problems of SCHOLL. To compare the lower bounds $ LB_1 $ to $ LB_3 $ and CG with each other, we run these algorithms on all of the 269 benchmark problems. Table \ref{table0} presents the results of this experiment. In this table,  the relative deviation is computed as $ \frac{UB-LB}{UB} \times 100 \%$, where LB is the obtained lower bound and UB is the best known upper bound. Avg. Rel. Dev. and Max. Rel. Dev. rows are the reported average and maximum relative deviations, respectively. Avg. Abs. Dev. and Max. Abs. Dev. rows are the reported average and maximum absolute deviations, respectively. \#Opt Found row reports the number of cases where the obtained lower bound matched the best known upper bound.

As seen in table \ref{table0}, CG outperforms the standard lower bounds $ LB_1 $ to $ LB_3 $ in all of the performance criteria except for the average CPU time. Moreover, the lower bounds obtained by CG match the best known solution in 238 cases and for the remaining cases, it deviates from the best known solution by only 1 station. This shows that CG obtains tight lower bounds for the UALBP-1, however its average CPU time is relatively high. This may raise questions over its applicability in a branch and bound procedure. To address this question, we consider a modification of BP\&R which does not use CG, we refer to this algorithm as BP\&R without CG (BP\&RWOCG).

{\setstretch{1.0}
\begin{table}[H]
	\centering
	\caption{Comparison of lower bounds}
	%	\tbl{Comparison of lower bounds}
	{
		\begin{tabular}{ccccc}
			\hline
			& $ LB_1 $ & $ LB_2 $ & $ LB_3 $ & CG \\\hline
			Avg. Rel. Dev. (\%)	&3.39  & 63.13& 46.69 & \textbf{0.85}\\
			Max. Rel. Dev. (\%)	& 37.29 &  100& 100 & \textbf{20} \\
			Avg. Abs. Dev. 	&1.23  & 11.39 & 9.25 & \textbf{0.11}\\
			Max. Abs. Dev. 	& 22 & 37 & 36 & \textbf{1}\\
			\#Opt Found 	& 194 & 13 & 14 & \textbf{239}\\
			Avg. CPU (s)	& \textbf{0.00}  &\textbf{ 0.00} & \textbf{0.00} & 4.33\\ \hline
		\end{tabular}
	}
	\label{table0}
\end{table}	
}

In the next experiment we investigate the performance of BP\&R and BP\&RWOCG. We run these algorithms on  all of the 269 benchmark problems and report the results in table \ref{table1}. In this table  ``Optimal found" counts the number of cases where the optimal solution is obtained. ``Optimal verified" counts the number of cases where the obtained solution is proven to be optimal. ``Avg. dev." and ``Max. dev." represent the average and maximum relative deviation from lower bound. The relative deviation is computed as $ \frac{UB-LB}{LB} \times 100 \%$, where UB and LB are obtained upper and lower bounds, respectively. ``Avg. CPU verified (s)" shows the average run times in seconds on the problems for which optimality was verified. ``Avg. CPU all (s)" shows the average run time, in seconds, for all instances. For 6 cases, BP\&RWOCG runs out of memory, these cases are not considered when computing the average CPU times for BP\&RWOCG. The results of ULINO algorithm are taken from (Scholl and Klein 1999b). They coded their algorithm using Borlands Pascal 7.0. and performed tests 
on an IBM-compatible personal computer with a 80486 DX2-66 central
processing unit and the operating system MS DOS 6.2. They also applied a time limit of 500 seconds per instance (excluding input and output operations). 

\begin{table}[H]
	\centering
	\caption{Comparison of BP\&R with BP\&RWOCG and ULINO}
		\begin{tabular}{cccc}
		\hline
		& ULINO & BP\&R & BP\&RWOCG \\\hline
		Optimal found	& 233 & \textbf{255} & 250\\
		Optimal verified	& 233 & \textbf{255} & 233\\
		Avg. Rel. Dev. (\%)	&0.59  & \textbf{0.26}& 0.60 \\
		Max. Rel. Dev. (\%)	& 10 &  \textbf{7.14}& 11.76 \\
		Avg. Abs. Dev. 	&0.13  & \textbf{0.05} & 0.16 \\
		Max. Abs. Dev. 	& \textbf{1} &  \textbf{1} & 4\\
		Avg. CPU verified (s)	& - & 9.11 & 2.34 \\
		Avg. CPU all (s)	& 82.09  & \textbf{34.66} & 59.11 \\ \hline
\end{tabular}
	\label{table1}
\end{table}

As seen in table \ref{table1}, out of 269 benchmark problems, BP\&R is able to solve and verify optimality of 255 problems. Out of these 255 problems, 216 where closed at the root node. Meaning that, for these problems, CG and MHHU where able to prove optimality without branching.

MHHU was able to find the optimal solution in 232 instances out of 269 benchmark problems. In 36 instances, MHHU deviates from optimal solution by only 1 station. For one instance, MHHU deviates from optimal solution by 2 stations. 

Comparing BP\&R with ULINO, BP\&R outperforms ULINO in terms of number of optimal solutions found and verified, BP\&R also does better than ULINO in terms of average and maximum deviations from lower bound. ULINO never does better than BP\&R, in terms of upper and lower bounds, for any benchmark problem. However, the run times cannot be compared because of different computer used to run ULINO. It is noteworthy to mention that BP\&R was able to verify the optimal solutions in less than 10 seconds on average. 

Comparing BP\&R with BP\&RWOCG, BP\&R outperforms BP\&RWOCG in all performance criteria except for Avg. CPU verified (s), where BP\&RWOCG does better. This highlights the importance of using CG as a lower bound in  BP\&R. Because $ LB_1 $ to $ LB_3 $ are faster than CG, BP\&R uses them first, if they fail to prune a node, CG is called to obtain tighter bounds. For 269 benchmark problems, a total of 4,878,445 nodes where searched. $ LB_1 $ to $ LB_3 $ are called for all of these nodes, out of which 4,335,666 where pruned. Out of the remaining 542,779 nodes CG was able to prune 395,649. On average, CG improved the number of pruned nodes by 9.12 percent.

Table \ref{table2} shows the detailed results for the challenging problems where BP\&R was able to find and verify the optimal solution. Challenging problems are defined to be the ones ULINO was not able to verify optimality. For some of these problems, the optimal solution is verified from SALBP-1 results in the literature. These results are labeled ``Known optimal (or range)" in the table. The instances for which optimality is verified are highlighted in boldface. The open problems solved by BP\&R are determined by a star ``*" besides the results. As seen in table \ref{table2}, BP\&R is able to solve 13 open problems. Moreover, there are 9 problems for which ULINO was not able to verify optimality but the lower bound obtained from solving SALBP-1 verifies optimality of these problems. BP\&R is able to find and verify the optimal solution for these 9 problems without using the SALBP-1 optimal solution as a lower bound. 

Table \ref{table3} shows the results obtained for the remaining 14 problems that BP\&R was not able to solve. As seen in this table, 13 of the unsolved problems are ``Arcus 2" with different cycle times, and 1 problem is ``Scholl" with cycle time of 1422. As seen in this table, for all of the 14 problems, both ULINO and BP\&R have the same performance in terms of upper and lower bounds.
\begin{table}[H]
	\centering
	\caption{Solved challenging problems}
%	\tbl{Solved challenging problems}
	{%
		\begin{tabular}{ccccccccccccc}
			\hline
			\multicolumn{3}{c}{Problem} &  & \multicolumn{2}{c}{\begin{tabular}[c]{@{}c@{}}Known optimal\\ (or range)\end{tabular}} &  & \multicolumn{2}{c}{ULINO} &  & \multicolumn{3}{c}{BP\&R}                      \\ \cline{1-3} \cline{5-6} \cline{8-9} \cline{11-13} 
			Name        & n     & c     &  & LB                                         & UB                                         &  & LB          & UB          &  & LB           & UB           & CPU              \\ \cline{1-3} \cline{5-6} \cline{8-9} \cline{11-13} 
			Arcus 1     & 83    & 3786  &  & \textbf{21}                                & \textbf{21}                                &  & 21          & 22          &  & \textbf{21}  & \textbf{21}  & \textbf{2.04}    \\
			Arcus 2     & 111    & 11570  &  & \textbf{13}                                & \textbf{13}                                &  & 13          & 14          &  & \textbf{13}  & \textbf{13}  & \textbf{65.64}    \\
			Barthol2    & 148   & 85    &  & \textbf{50}                                & \textbf{50}                                &  & 50          & 51          &  & \textbf{50}  & \textbf{50}  & \textbf{1.45}    \\
			Barthol2    & 148   & 89    &  & \textbf{48}                                & \textbf{48}                                &  & 48          & 49          &  & \textbf{48}  & \textbf{48}  & \textbf{0.83}    \\
			Barthol2    & 148   & 93    &  & \textbf{46}                                & \textbf{46}                                &  & 46          & 47          &  & \textbf{46}  & \textbf{46}  & \textbf{0.80}    \\
			Barthol2    & 148   & 97    &  & \textbf{44}                                & \textbf{44}                                &  & 44          & 45          &  & \textbf{44}  & \textbf{44}  & \textbf{0.84}    \\
			Kilbridge   & 45    & 56    &  & \textbf{10}                                & \textbf{10}                                &  & 10          & 11          &  & \textbf{10}  & \textbf{10}  & \textbf{0.11}    \\
			Mukherje    & 94    & 176   &  & 24                                         & 25                                         &  & 24          & 25          &  & \textbf{24*} & \textbf{24*} & \textbf{0.52*}   \\
			Scholl      & 297   & 1394  &  & \textbf{50}                                & \textbf{50}                                &  & 50          & 51          &  & \textbf{50}  & \textbf{50}  & \textbf{20.78}   \\
			Scholl      & 297   & 1515  &  & \textbf{46}                                & \textbf{46}                                &  & 46          & 47          &  & \textbf{46}  & \textbf{46}  & \textbf{22.96}   \\
			Tonge       & 70    & 160   &  & 22                                         & 23                                         &  & 22          & 23          &  & \textbf{22*} & \textbf{22*} & \textbf{7.49*}   \\
			Tonge       & 70    & 176   &  & 20                                         & 21                                         &  & 20          & 21          &  & \textbf{20*} & \textbf{20*} & \textbf{7.19*}   \\
			Warnecke    & 58    & 54    &  & 30                                         & 31                                         &  & 30          & 31          &  & \textbf{31*} & \textbf{31*} & \textbf{0.89*}   \\
			Warnecke    & 58    & 62    &  & 26                                         & 27                                         &  & 26          & 27          &  & \textbf{26*} & \textbf{26*} & \textbf{3.62*}   \\
			Warnecke    & 58    & 65    &  & 24                                         & 25                                         &  & 24          & 25          &  & \textbf{25*} & \textbf{25*} & \textbf{0.49*}   \\
			Warnecke    & 58    & 68    &  & 23                                         & 24                                         &  & 23          & 24          &  & \textbf{23*} & \textbf{23*} & \textbf{8.60*}   \\
			Warnecke    & 58    & 71    &  & 22                                         & 23                                         &  & 22          & 23          &  & \textbf{22*} & \textbf{22*} & \textbf{155.44*} \\
			Warnecke    & 58    & 74    &  & 21                                         & 22                                         &  & 21          & 22          &  & \textbf{22*} & \textbf{22*} & \textbf{3.95*}   \\
			Warnecke    & 58    & 82    &  & 19                                         & 20                                         &  & 19          & 20          &  & \textbf{19*} & \textbf{19*} & \textbf{7.42*}   \\
			Wee-mag     & 75    & 47    &  & 32                                         & 33                                         &  & 32          & 33          &  & \textbf{32*} & \textbf{32*} & \textbf{5.05*}   \\
			Wee-mag     & 75    & 49    &  & 31                                         & 32                                         &  & 31          & 32          &  & \textbf{32*} & \textbf{32*} & \textbf{0.63*}   \\
			Wee-mag     & 75    & 50    &  & 31                                         & 32                                         &  & 31          & 32          &  & \textbf{32*} & \textbf{32*} & \textbf{0.70*}   \\ \hline
		\end{tabular}%
	}
	\label{table2}
\end{table}

\begin{table}[H]
	\centering
	\caption{Remaining unsolved problems}
%	\tbl{Remaining unsolved problems}
	{
		\begin{tabular}{cccccccccccc}
		\hline
		\multicolumn{3}{c}{\multirow{2}{*}{Problem}} & \multirow{2}{*}{} & \multicolumn{2}{c}{\multirow{2}{*}{\begin{tabular}[c]{@{}c@{}}Known range\end{tabular}}} & \multirow{2}{*}{} & \multicolumn{2}{c}{\multirow{2}{*}{ULINO}} & \multirow{2}{*}{} & \multicolumn{2}{c}{\multirow{2}{*}{BP\&R}} \\
		\multicolumn{3}{c}{}                         &                   & \multicolumn{2}{c}{}                                                                                    &                   & \multicolumn{2}{c}{}                       &                   & \multicolumn{2}{c}{}                       \\ \cline{1-3} \cline{5-6} \cline{8-9} \cline{11-12} 
		Name            & n           & c            &                   & LB                                                 & UB                                                 &                   & LB                   & UB                  &                   & LB                   & UB                  \\ \cline{1-3} \cline{5-6} \cline{8-9} \cline{11-12} 
		Arcus 2         & 111         & 5785         &                   & 26                                                 & 27                                                 &                   & 26                   & 27                  &                   & 26                   & 27                  \\
		Arcus 2         & 111         & 6016         &                   & 25                                                 & 26                                                 &                   & 25                   & 26                  &                   & 25                   & 26                  \\
		Arcus 2         & 111         & 6267         &                   & 24                                                 & 25                                                 &                   & 24                   & 25                  &                   & 24                   & 25                  \\
		Arcus 2         & 111         & 6540         &                   & 23                                                 & 24                                                 &                   & 23                   & 24                  &                   & 23                   & 24                  \\
		Arcus 2         & 111         & 6837         &                   & 22                                                 & 23                                                 &                   & 22                   & 23                  &                   & 22                   & 23                  \\
		Arcus 2         & 111         & 7162         &                   & 21                                                 & 22                                                 &                   & 21                   & 22                  &                   & 21                   & 22                  \\
		Arcus 2         & 111         & 7520         &                   & 20                                                 & 21                                                 &                   & 20                   & 21                  &                   & 20                   & 21                  \\
		Arcus 2         & 111         & 7916         &                   & 19                                                 & 20                                                 &                   & 19                   & 20                  &                   & 19                   & 20                  \\
		Arcus 2         & 111         & 8356         &                   & 18                                                 & 19                                                 &                   & 18                   & 19                  &                   & 18                   & 19                  \\
		Arcus 2         & 111         & 8847         &                   & 17                                                 & 18                                                 &                   & 17                   & 18                  &                   & 17                   & 18                  \\
		Arcus 2         & 111         & 9400         &                   & 16                                                 & 17                                                 &                   & 16                   & 17                  &                   & 16                   & 17                  \\
		Arcus 2         & 111         & 10027        &                   & 15                                                 & 16                                                 &                   & 15                   & 16                  &                   & 15                   & 16                  \\
		Arcus 2         & 111         & 10743        &                   & 14                                                 & 15                                                 &                   & 14                   & 15                  &                   & 14                   & 15                  \\

		Scholl          & 297         & 1422         &                   & 49                                                 & 50                                                 &                   & 49                   & 50                  &                   & 49                   & 50                  \\ \hline
	\end{tabular}
}
	\label{table3}
\end{table}

\subsection{Experiments on the new data set}
Otto et al \cite{otto2013systematic} published a new problem database of generated instances for the SALBP. In this subsection we use this data set to evaluate the performance of BP\&R algorithm. Eventhough this data set is originally devised for the SALBP we can use BP\&R to solve the U shaped version of the problems inside this data set.

Table \ref{table_5_otto} shows the results of BP\&R on the new data set. As seen in this table, all of the small instances, 99.73 percent of medium instances, 91.05 percent of large instances and 65.9 percent of very large instances are solved by BP\&R. The fourth column, ``\# MHHU found Optimum", shows the number of instances where MHHU found the optimal solution. The fifth column, ``\# Closed in root node", shows the number of instances that where closed in the root node. In other words, for these instances MHHU finds the optimal solution and the lower bounds prove its optimality without need for further branching.

\begin{table}[H]
	\centering
	\caption{Number of solved problem instances}
%		\tbl{Number of solved problem instances}
		{
	\begin{tabular}{ccccc}
		\hline
		Size       & \# BP\&R Solved & \% BP\&R Solved & \# MHHU found Optimum & \# Closed in root node \\ \hline
		Small      & 525             & 100.00          & 486                & 481                 \\
		Medium     & 5236            & 99.73           & 4118               & 4062                \\
		Large      & 478             & 91.05           & 359                & 359                 \\
		Very Large & 346             & 65.90           & 346                & 346                 \\ \hline
	\end{tabular}}
\label{table_5_otto}
\end{table}

Table \ref{Table_6} shows the run times statistics of BP\&R. As seen in this table, all of the run times statistics increase as the problem size increases.

\begin{table}[H]
	\centering
	\caption{Run times of BP\&R (Seconds)}
%	\tbl{Run times of BP\&R (Seconds)}
	{
	\begin{tabular}{ccccc}
		\hline
		Size       & Mean    & Standard deviation & Max     & Min    \\ \hline
		Small      & 0.164   & 0.222              & 2.224   & 0.007  \\
		Medium     & 3.544   & 29.319             & 501.235 & 0.014  \\
		Large      & 60.540  & 146.494            & 509.818 & 0.109  \\
		Very Large & 253.994 & 188.356            & 532.686 & 15.300 \\ \hline
	\end{tabular}}
	\label{Table_6}
\end{table}

An analysis of instances unsolved by BP\&R is presented in Table \ref{Table_7}. This table shows how unsolved instance are divided in terms of problem characteristics. As seen in this table, for problem of medium size, all of the unsolved instances have a mid range order strength (0.6) and a central distribution of task time. Moreover, problems with bottlenecks (BN) in their precedence graph structure tend to be harder to solve for BP\&R. For problems of largers size, problems with central task time distribution still make up the majority of unsolved instances. In terms of order strength, problems with mid range to low order strength tend to be more challenging for BP\&R to solve. Moreover, in terms of graph structure, mixed problems appear to be harder to solve.

\begin{table}[H]
	\centering
	\caption{Unsolved problem statistics}
	{
	\begin{tabular}{ccccccccccccc}
		\hline
		\multirow{2}{*}{Size} &  & \multicolumn{3}{c}{Structure} &  & \multicolumn{3}{c}{Order Strength} &  & \multicolumn{3}{c}{Peak Location} \\ \cline{3-5} \cline{7-9} \cline{11-13} 
		&  & BN       & CH      & MIX      &  & 0.2        & 0.6       & 0.9       &  & Bottom    & Central   & Bimodal   \\ \hline
		Medium                &  & 11       & 2       & 1        &  & 0          & 14        & 0         &  & 0         & 14        & 0         \\
		Large                 &  & 9        & 4       & 34       &  & 0          & 24        & 23        &  & 0         & 45        & 2         \\
		Very Large            &  & 51       & 51      & 77       &  & 76         & 76        & 27        &  & 0         & 175       & 4         \\ \hline
	\end{tabular}}
\label{Table_7}
\end{table}

\section{Conclusions and future research}\label{sec4}
In this paper we developed a branch, price and remember algorithm for the U shaped assembly line balancing problem. Computational results show that the proposed algorithm verifies the optimal solution for 255 of the 269 benchmark problems available in the literature. Previous best known exact algorithm is able to solve 233 of the 269 benchmark problems. The proposed approach uses column generation to obtain tight bounds that are computed reasonably fast. It also uses memory to avoid revisiting already visited nodes in the search tree. 
The proposed algorithm is designed for the U shaped assembly line balancing problem, but the idea of branch, price and remember can be used to obtain tight bounds for other types generalized assembly line balancing problems such as two sided lines, balancing and scheduling problems with sequence dependent setup times, learning etc. Therefore, as a possible future research in this area, we suggest using the ideas in this paper to develop branch and price algorithms for other types of assembly line balancing problem.\\
\textbf{References}

\bibliographystyle{plain}
\bibliography{Refs}

\end{document}